\documentclass[10pt,aps,pre,twocolumn,superscriptaddress,nofootinbib,floatfix]{revtex4-2}
\usepackage{amsmath,amssymb,amsthm}
\usepackage{graphicx}
\usepackage{booktabs}
\usepackage[colorlinks=true,linkcolor=blue,citecolor=blue,urlcolor=blue]{hyperref}

\newtheorem{theorem}{Theorem}
\newtheorem{proposition}{Proposition}
\newtheorem{conjecture}{Conjecture}
\newcommand{\R}{\mathbb{R}}
\newcommand{\Q}{\mathbb{Q}}
\newcommand{\ev}[1]{\langle #1 \rangle}
\newcommand{\pcert}{p_{\mathrm{c}}}
\newcommand{\COP}{\mathcal{C}}
\newcommand{\PSDN}{\mathcal{P}\!+\!\mathcal{N}}
\newcommand{\Kc}{\mathcal{K}}

\begin{document}

\title{Stationarity is not enough: tightness of the quantum mechanical
bootstrap and the copositive cone}

\author{Daniel Keren}
\email{dkeren@cs.haifa.ac.il}
\affiliation{Department of Computer Science, University of Haifa,
Haifa, Israel}

\date{\today}

\begin{abstract}
The numerical bootstrap for quantum mechanics tests positivity of a candidate state only against sums of squares (SOS), whereas every physical state assigns nonnegative expectation to every pointwise nonnegative polynomial. In one dimension the two coincide; in two or
more they do not.  Whether this gap is realized depends sharply on which
constraint set is imposed.  For the \emph{stationary} bootstrap, which
imposes $\ev{[H,\mathcal{O}]}=0$ and is the relaxation appropriate to thermal and
mixed states, we exhibit a two-dimensional quartic double well potential and a
moment vector that satisfies every level-three stationary constraint exactly, has a
positive definite moment matrix, and yet assigns a negative expectation to a
polynomial nonnegative on $\R^2$; it is therefore the moment sequence of
no state.  All data are rational and every step is verified in exact arithmetic.
For the \emph{eigenstate} bootstrap, which additionally imposes
$\ev{\mathcal{O}H}=E\ev{\mathcal{O}}$, the same search finds no violation in
any of five settings spanning two, three and five degrees of freedom and
truncation levels three and four, tested in each case against a family of
separating polynomials that is complete at the relevant degree or matched to
the copositive obstruction.  The single exception occurs at a
truncation so low that only one eigenstate constraint survives, and the theory
here accounts for it.  We identify the mechanism: the eigenstate constraints
bound the high momentum moments, otherwise unbounded on the feasible set, and
it is those directions that reach the region between the SOS and nonnegative cones.  Finally, under the reflection symmetries of a
typical potential the relevant obstruction is copositivity rather than
nonnegativity, which, for the quartic witnesses available at the lowest
truncation, places the first possible failure at five degrees of freedom.  We conjecture that the eigenstate
constraints imply an Archimedean-type bound on the momentum moments, and
formulate the corresponding tightness statement.
\end{abstract}

\maketitle

\section{Introduction}

The bootstrap approach to quantum mechanics~\cite{HHK2020} determines spectra
without solving a differential equation.  For a state $\rho$ and Hamiltonian
$H$, the linear functional $L(A)=\mathrm{tr}(\rho A)$ satisfies
\begin{equation}
  L(A^\dagger A)\ \ge\ 0
  \label{eq:pos}
\end{equation}
for every operator $A$, and, if $\rho$ is stationary,
\begin{equation}
  L([H,A])\ =\ 0 .
  \label{eq:stat}
\end{equation}
If moreover $\rho=|\psi\rangle\langle\psi|$ with $H\psi=E\psi$, then
additionally
\begin{equation}
  L(\mathcal{O}H)\ =\ E\,L(\mathcal{O}) .
  \label{eq:eig}
\end{equation}
Discarding the state and retaining \eqref{eq:pos}--\eqref{eq:eig} for $A$ in a
finite-dimensional span of polynomials in position and momentum yields a
semidefinite program (SDP) in the moments $\ev{q^\alpha p^\beta}$
\cite{BH2021,BH2022a,BH2023}.

The distinction between \eqref{eq:stat} alone and \eqref{eq:stat} together
with \eqref{eq:eig} is standard and is drawn explicitly in the
literature~\cite{BH2023}: the former characterizes a stationary state, the
latter an energy eigenstate.  A Gibbs state $e^{-\beta H}/Z$ is stationary
and is not an eigenstate; for a mixture of two or more energy levels
\eqref{eq:eig} admits no solution $E$ at all.  Thermal bootstraps therefore
impose \eqref{eq:stat} with positivity and thermal inequalities, and no
eigenstate condition~\cite{CGSY2025}.  We will show that the two relaxations
behave differently with respect to tightness.

The construction is a truncated moment problem, and it inherits that problem's
central subtlety.  Requiring $L(A^\dagger A)\ge 0$ for every $A$ in the aforementioned polynomial span
is equivalent to requiring $L$ to be nonnegative on the sums of squares (SOS) formed from
it.  A genuine state
satisfies more: $L(f)\ge0$ for every $f$ pointwise nonnegative.  Hilbert
\cite{Hilbert1888} showed the two classes differ, explicit nonnegative non-SOS
forms are known~\cite{Motzkin,ChoiLam,Robinson}, and asymptotically the 
gap between them increases~\cite{Blekherman}.  In one variable there is
no gap, which is why one-dimensional bootstraps behave so well; in two or more
variables there is.

Whether that gap is ever \emph{realized} by a Schr\"odinger operator has been
open.  Spurious allowed regions have been reported~\cite{spurious},
rigorously establishing convergence has been stated as an open
problem~\cite{central2025}, and a distinct failure -- an operator ambiguity for
potentials mixing polynomial and exponential terms -- has recently been
identified in one dimension~\cite{Morita2026}.  That the nonnegative and
sum-of-squares cones differ in more than one variable is known in the bootstrap
literature and handled there with a multiplier~\cite{NancarrowXin2023}; what
has been missing is an instance in which the difference is attained.  The question is also outside the reach of the
convergence theory for noncommutative polynomial
optimization~\cite{NPO2026,NPOhierarchy}, which requires an Archimedean
condition and hence bounded operators, whereas $[q,p]=i$ admits no
finite-dimensional or bounded representation.

We treat the two relaxations separately.  For the stationary bootstrap we
settle the question at level three: the gap is realized, and we exhibit a
point that realizes it, certified in exact arithmetic.  For the eigenstate
bootstrap we do not settle it, and instead present systematic numerical evidence that no
violation exists, together with a mechanism that accounts for the difference.

Nothing here bears on the extensive body of one-dimensional results obtained
with the method~\cite{BH2021,BH2022a,BH2023,BHreflection,spurious}: for a single
degree of freedom every nonnegative polynomial is a sum of squares, so the
relaxation tests exactly the correct condition and the gap we study is empty.
Our phenomenon requires at least two degrees of freedom.  It is also distinct
from the boundary anomalies of~\cite{BHhalfline}, which arise from operator
domains rather than from the positivity cone.

\section{Setup}
\label{sec:setup}

We consider
\begin{equation}
  H \;=\; \tfrac12\sum_{j=1}^{d} p_j^2 \;+\; V(q_1,\dots,q_d),
  \label{eq:H}
\end{equation}
with $V$ a polynomial for which $H$ is bounded below with a discrete spectrum,
and the level-$K$ bootstrap: the unknowns are the moments
$m_{\alpha\beta}=\ev{q^\alpha p^\beta}$ of degree at most $2K$, subject to
$\ev{1}=1$, hermiticity, positive semidefiniteness of the moment matrix $M_K$
indexed by monomials of degree at most $K$, and
\begin{align}
  \text{(S)}\quad & \ev{[H,\mathcal{O}]}=0, && \deg\mathcal{O}\le 2K-\deg V+2, \notag\\
  \text{(E)}\quad & \ev{\mathcal{O}H}=E\ev{\mathcal{O}}, && \deg\mathcal{O}\le 2K-\deg V .
  \label{eq:constraints}
\end{align}
Fixing $E$ renders every constraint linear and turns feasibility into an SDP;
this is the method used in~\cite{BH2023}.  The energy normalization $\ev{H}=E$ --
the case $\mathcal{O}=1$ of (E) -- is imposed throughout, as it allows 
the energy to enter; we write $\mathcal{F}_K^{\mathrm S}(E)$ for the
feasible set obtained by imposing that together with (S) alone, and
$\mathcal{F}_K^{\mathrm{SE}}(E)$ for the set obtained by imposing (S) and the
entire family (E).  Clearly
$\mathcal{F}_K^{\mathrm{SE}}(E)\subseteq\mathcal{F}_K^{\mathrm S}(E)$.

Each bound in \eqref{eq:constraints} is the largest degree of $\mathcal{O}$
for which the resulting relation involves only moments of degree at most
$2K$, the unknowns of the level-$K$ bootstrap.  The two differ because
$\mathcal{O}H$ has degree $\deg\mathcal{O}+\deg V$, while the commutator
$[H,\mathcal{O}]$ loses two.  For a quartic $V$ at $K=3$, this yields 
$\deg\mathcal{O}\le4$ in (S) and $\deg\mathcal{O}\le2$ in (E).  
A relation whose bound is exceeded involves moments that are not among the
unknowns, and it must be discarded.

\subsection{Two dimensions require momentum moments}
\label{sec:momentum}
Writing $q_1=x$, $q_2=y$, the relation $\ev{[H,f\,p_x]}=0$ gives, for any
polynomial $f(x,y)$,
\begin{align}
  \ev{(\partial_xf)\,p_x^2} + \ev{(\partial_yf)\,p_yp_x}
  \;=\; \ev{f\,\partial_xV} - \tfrac14\ev{\partial_x\nabla^2f},
  \label{eq:hypervirial}
\end{align}
where the expectations are of the real parts.  We have verified
\eqref{eq:hypervirial} in exact rational arithmetic at the moment vector of
Theorem~1, for every monomial $f$ of degree at most three -- the most the
budget allows, since $\mathcal{O}=f\,p_x$ costs a degree -- both sides agree
identically, in real and imaginary parts alike.  That vector satisfies (S) and violates (E), so the check confirms
\eqref{eq:hypervirial} as a consequence of (S) alone: had the derivation
appealed to \eqref{eq:eig} at any step, the identity would fail there.  A
check against a genuine eigenstate could not distinguish the two, since
\eqref{eq:eig} holds there as well.

In one dimension the momentum moment can be eliminated.  The cross term
$\ev{(\partial_yf)\,p_yp_x}$ is absent, and taking $f=x^n$, $p^2=2(H-V)$ together with the
eigenstate condition \eqref{eq:eig} converts $\ev{x^{n-1}p^2}$ into
$2E\ev{x^{n-1}}-2\ev{x^{n-1}V}$.  What remains is a recursion in the single
sequence $\ev{x^k}$ with $E$ as a parameter, which is how one-dimensional
bootstraps are run~\cite{HHK2020}.  Note that this elimination uses
\eqref{eq:eig}, not \eqref{eq:stat} alone.

In two dimensions no such elimination exists.  The left-hand side of
\eqref{eq:hypervirial} carries $\ev{(\partial_xf)p_x^2}$, and $p_x^2$ by itself
is not $2(H-V)$ -- only $p_x^2+p_y^2$ is -- so nothing converts it into
position moments while leaving the cross term untouched.  The multidimensional
bootstrap is therefore a genuinely noncommutative moment problem, in which the
momentum moments are unknowns in their own right; Section~\ref{sec:mechanism}
shows that this is precisely where the relaxation is weakest.

Nevertheless the position moments $\ev{q^a}$ are carried by a principal
submatrix of $M_K$ and are exactly a truncated commutative moment sequence in
$d$ variables.  A polynomial in the positions alone with negative expectation
therefore rules out the entire moment vector, which makes the classical
nonnegative-vs-SOS gap available even though the problem is noncommutative.
This is what we focus on.

We call a polynomial $f$ a \emph{separating witness} for a moment vector $m$
if $f\ge0$ pointwise while $L_m(f)<0$.  Such an $f$ certifies that $m$ is not
the moment sequence of any positive measure, since every measure gives
$L(f)\ge0$; it is a separating hyperplane between $m$ and the moment cone.

The noncommutative structure does not alleviate the problem, although in the
free case it would.  Helton's theorem~\cite{Helton2002} states that a
noncommutative polynomial positive on all tuples of symmetric matrices of all
finite sizes is a sum of squares, so in the free algebra there is no gap at
all; this underlies the noncommutative optimization
programme~\cite{BKP2016}.  However the Weyl algebra is not free: it carries
$[q_j,p_j]=i$, and taking traces shows it has no finite-dimensional
representation, so Helton's hypothesis is never met.

It is worth drawing the consequence sharply, because this delimits where the
phenomenon of this paper can occur at all.  In an algebra with
finite-dimensional representations -- the Clifford algebra of fermionic
modes, or the Pauli algebra of spins -- an element is positive \emph{if and
only if} it equals $g^\dagger g$ for some $g$ in the algebra~\cite{LiLu2023}.
Positivity and sums of squares then coincide, and the only thing a finite
truncation loses is the \emph{degree} of $g$; the failure of a low-order
relaxation is a degree-truncation effect and the hierarchy converges.  This is
the situation for the variational reduced-density-matrix method of quantum
chemistry, whose $P,Q,G$ conditions are exactly the dual of the degree-two
sum-of-squares problem~\cite{LiLu2023,NPOhierarchy}.  For the Weyl algebra no
such representation exists, positivity is not equivalent to being a sum of
squares, and the classical gap in the position block survives at every order.
The gap is nonetheless thin: Navascu\'es \emph{et al.}~\cite{NGAPP2013} prove
that every element of the Weyl algebra nonnegative in the Schr\"odinger
representation is a limit of sums of squares, and trace the apparent
convergence of bosonic relaxations to rounding error.  Density is a statement
about the limit, not about a fixed truncation, and it is at fixed truncation
that we work.
That gap is what we exploit, and it is why the phenomenon is confined to
unbounded systems: the analogous construction is unavailable for spins and
fermions, however poorly their low-order relaxations may perform.  Bosonic
matrix quantum mechanics is not in this class: its variables obey canonical
commutation relations at every rank, so the same trace argument applies and
the gap survives.

\subsection{Symmetry reduction, and why it is not a restriction}

Every potential considered in this paper is invariant under a finite group $G$
of $*$-automorphisms generated by the sign flips $q_j\to-q_j$ together with a
(possibly trivial) subgroup of coordinate permutations.  Two exact
consequences follow: moments vanish unless every $a_j+c_j$ is even, and with
$\pi(\alpha)=(a_j+c_j)\bmod 2$ the moment matrix is block diagonal in the
classes of $\pi$, with no change of basis.

The reduction is not merely computational.  The reduced feasible set is a
\emph{subset} of the full one, so a spurious point found in the symmetric
sector is spurious for the full bootstrap.  Conversely, restricting the search
for separating polynomials to $G$-invariant ones \emph{loses nothing}: the
group average of a nonnegative polynomial is nonnegative and has the same
expectation against an invariant moment vector.  Consequently a negative
search over the complete invariant family at a given degree establishes that
\emph{no} witness of that degree separates, not merely that none was found.

\section{The stationary bootstrap is not tight}
\label{sec:theorem}

Fix $d=2$, with rational potential and energy
\begin{equation}
  V = 20\big[(q_1^2-1)^2+(q_2^2-1)^2\big] - q_1^2q_2^2,
  \qquad E = 12 .
  \label{eq:system}
\end{equation}
The potential has four minima, at $(\pm q_*,\pm q_*)$ with $q_*^2=40/39$; its
ground-state energy is $E_0=11.22078$, obtained by Rayleigh--Ritz in a
harmonic-oscillator basis with the oscillator frequency matched to the well, so
the chosen $E$ lies above it.  The quartic part
$20(q_1^4+q_2^4)-q_1^2q_2^2$ is positive definite, so $H$ is bounded below
with discrete spectrum and the premise of Section~\ref{sec:search} is met.

We stress that $E$ need not be an eigenvalue: a feasible point that is not a
state is a failure of the relaxation at whatever energy it occurs.  Nor is the
failure vacuous for want of states at that energy.  A stationary mixed state
realizes any mean energy above $E_0$, so states satisfying (S), $\ev{1}=1$ and
$\ev{H}=12$ exactly do exist, and $\mathcal{F}_3^{\mathrm S}(12)$ is meant to
contain them.  It contains more: the spectrum of \eqref{eq:system} begins with
a near-degenerate quadruplet ending at $11.25041$ and has nothing further until
$21.57045$, so $E=12$ sits in a gap of width $10.3$, and level three does not
exclude it.

\begin{theorem}
There exists a moment vector $m$ with entries in $\Q$ such that
\begin{enumerate}
\item every level-three constraint of type \textup{(S)} for
      \eqref{eq:system} holds exactly, together with $\ev{1}=1$ and
      $\ev{H}=E$;
\item all four blocks of $M_3$ are positive definite;
\item $L_m(\pcert) = -\tfrac{10649385831371}{122880000000000}
      = -0.0866649238\ldots < 0$, where $\pcert$ is a polynomial with rational
      coefficients satisfying $\pcert\ge0$ on all of $\R^2$.
\end{enumerate}
Consequently $m$ is not the moment sequence of any positive measure, hence of
no quantum state, pure or mixed, yet it is feasible for the level-three
stationary bootstrap.
\end{theorem}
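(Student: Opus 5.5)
The proof is a certified construction: the strategy is to produce the rational vector $m$ and the rational polynomial $\pcert$ explicitly and then check items 1--3 in exact arithmetic.

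\textbf{Choosing the witness.} First fix $\pcert$, a position-only polynomial of degree at most six, since $2K=6$ at level three. It should be invariant under the group $G$ generated by $q_j\to-q_j$ and $q_1\leftrightarrow q_2$, which is a symmetry of \eqref{eq:system}. The natural template is a Motzkin- or Robinson-type form, shifted and rescaled so that its zeros sit near the wells at $(\pm q_*,\pm q_*)$, for instance a perturbation of the Motzkin polynomial $x^4y^2+x^2y^4-3x^2y^2+1$ evaluated at rescaled coordinates. Nonnegativity on $\R^2$ then follows from AM--GM (or from the Motzkin certificate after substitution) and is checked symbolically.

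\textbf{Finding $m$.} Next obtain $m$ numerically by solving an SDP over $\mathcal{F}_3^{\mathrm S}(12)$ restricted to the $G$-symmetric sector. The objective is to minimize $L_m(\pcert)$, with a small margin $\epsilon I$ imposed on each block of $M_3$ so that the optimum lies in the interior. By the symmetry-reduction discussion, any point found in this sector is also feasible for the full bootstrap, so nothing is lost by working there.

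\textbf{Rounding to an exact feasible point.} This is the step I expect to be the main obstacle. The constraints (S) at $\deg\mathcal{O}\le4$, together with $\ev{1}=1$ and $\ev{H}=12$, form a linear system with rational coefficients. The plan is:
\begin{itemize}
\item round the numerical solution to nearby rationals;
\item project it exactly onto the affine solution space, using a rational basis of the null space or an exact least-norm correction computed over $\Q$.
\end{itemize}
Item 1 then holds by construction. Because the SDP solution was kept strictly interior with margin $\epsilon$, the projection perturbs $M_3$ by far less than $\epsilon$, so positive definiteness survives. The hypervirial relations \eqref{eq:hypervirial} provide an independent cross-check of the exact constraint set.

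\textbf{Verifying the blocks and the witness value.} Positive definiteness of each of the four parity blocks of $M_3$ is certified exactly, either by a rational $LDL^\top$ factorization with positive pivots or by checking that all leading principal minors are positive. If a pivot fails, I would shrink the rounding tolerance, or move $m$ toward an exact interior point such as a rational combination with a genuine mixed stationary state's moments approximated. Item 3 is then a finite exact rational sum $L_m(\pcert)$, which should reproduce the stated value.

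\textbf{Conclusion.} Every positive measure gives $L(\pcert)\ge0$, and the position moments of any state form such a measure's moments. A negative value of $L_m(\pcert)$ therefore excludes every state, pure or mixed, which gives the final claim.
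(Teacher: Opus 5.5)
Your exact-arithmetic pipeline follows the paper's. You row-reduce the constraint system over $\Q$ to $m=m_0+\sum_j u_jN_j$ and round the numerical optimum in those coordinates, so that (S), $\ev{1}=1$ and $\ev{H}=E$ hold by construction. The four parity blocks are then certified by exact $LDL^\top$. The conclusion rests on the fact that the position moments of any density matrix are those of its position-space diagonal.

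The difference is the witness, and that is where your proposal is weakest. You fix it in advance, e.g.\ as the rescaled Motzkin form $1-3x^2y^2/s^4+(x^4y^2+x^2y^4)/s^6$ with $s^2=40/39$, and solve one convex SDP for it. Its nonnegativity is AM--GM on $1$, $X^2Y$, $XY^2$ with $X=x^2/s^2$, $Y=y^2/s^2$, which is a much cheaper certificate than the paper's.

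The paper does not fix the witness. It alternates an LP over the full six-dimensional space of invariant sextics, which contains your Motzkin form, with the SDP over $\mathcal{F}_3^{\mathrm S}(12)$. The polynomial it converges to has its minimum on the axes rather than at the wells, has vanishing $x^6+y^6$ coefficient, and dips to $-9.8\times10^{-5}$. It therefore has to be repaired by adding $\tfrac1{4096}+\tfrac1{65536}(x^6+y^6)$ and certified by an exact rational Gram matrix for $(x^2+y^2+w^2)P_{\mathrm c}$. Reznick's theorem does not apply directly because the leading form vanishes on the axes.

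The content of the theorem is that \emph{some} separating pair exists, and nothing in your plan shows that a fixed wells-centred Motzkin form separates any point of $\mathcal{F}_3^{\mathrm S}(12)$. If its minimum over that set is nonnegative, you have no certificate, and the witness the paper's search actually found has a different shape. If instead you perturb the template to improve separation, AM--GM no longer applies and you are back to a multiplier-SOS certificate like the paper's. So your route buys a one-line nonnegativity proof and a definitive answer for the chosen witness; the paper's joint search is what actually locates a separating pair.

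Finally, the constant in item~3 belongs to the paper's specific $\pcert$ and $m$. Your computation would give a different negative rational, which matches the stated value only after rescaling the witness by a positive rational. Your remark that it ``should reproduce the stated value'' is therefore not right as written.
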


Every assertion is verified in exact rational arithmetic: the equalities by
substitution over $\Q$, positive definiteness by exact $LDL^\top$ (smallest
pivot $1.721\times10^{-3}$), and the nonnegativity of $\pcert$ by the
certificate constructed in Section~\ref{sec:witness}.  No floating-point
computation enters the statement.  The
moment vector itself, the Gram matrix, and a standalone script that re-checks
all three assertions are provided as ancillary files.  That script builds (S)
alone, deliberately: a reader who adds the (E) rows will find the point
infeasible, which reproduces Section~\ref{sec:eigenstate} rather than refuting
Theorem~1.

That the conclusion covers mixed states as well as pure ones is worth making
explicit, since the stationary relaxation is exactly the one used for thermal
states.  For any density matrix $\rho$ the positional moments
$\mathrm{tr}(\rho\,q^a)$ are the moments of a genuine positive measure, namely
the position-space diagonal of $\rho$.  Hence $L(\pcert)\ge0$ is necessary for
\emph{every} state, and its violation rules out all of them at once.

It is worth stating precisely what this does and does not cover.  Conditions
\eqref{eq:stat} and \eqref{eq:pos} are the core of the thermal bootstrap of
Ref.~\cite{CGSY2025}, whose semidefinite program imposes positivity of the
thermal density matrix, normalization, canonical relations, and the stationary
state conditions $\ev{[H,\mathcal{O}]}_\beta=0$ -- and no eigenstate
condition, since \eqref{eq:eig} is false for a thermal state.  Their warm-up
example is a one-dimensional anharmonic oscillator and their main application
is matrix quantum mechanics, so the formulation is not tied to either setting.
That formulation additionally imposes the Kubo--Martin--Schwinger condition, a
genuine further constraint, nonlinear in the expectation values and implemented
through a semidefinite relaxation of the matrix logarithm.  Theorem~1 therefore
establishes non-tightness of the stationarity-plus-positivity relaxation and
leaves open whether the KMS condition closes the gap.  We regard that as the
natural next question, and note that KMS carries explicit temperature
dependence, which none of the constraints considered here do.

\subsection{The separating polynomial}
\label{sec:witness}

For a given feasible point, the best separating polynomial within the
six-dimensional space of invariant sextics spanned by
$1$, $q_1^2+q_2^2$, $q_1^4+q_2^4$, $q_1^2q_2^2$, $q_1^6+q_2^6$ and
$q_1^4q_2^2+q_1^2q_2^4$ is the one minimizing $L_m(p)$ subject to $p\ge0$
pointwise, together with a norm bound on the coefficients, which is needed
only because the objective is homogeneous in them.  This is a linear program,
with pointwise nonnegativity imposed by cutting planes.  Alternating this with the SDP
converges to
\begin{align}
  p \;=\;& 7.88174 - 4.25817\,(x^2{+}y^2) + 0.57512\,(x^4{+}y^4) \notag\\
        &- 7.99889\,x^2y^2 + 3.74206\,(x^4y^2{+}x^2y^4),
  \label{eq:witness}
\end{align}
with an overall scale that is immaterial, since only the sign of $L_m(p)$ is at
issue; on the true ground state $L(p)=1.32>0$.  Its global
minimum is $-9.846\times10^{-5}$, attained on the axes.  Note that $p$ cannot be
a sum of squares: a sextic sum of squares uses squares of cubics only, and
$M_3$ is indexed by exactly those monomials, so were $p$ one, $L_m(p)\ge0$
would follow automatically from $M_3\succeq0$ (as can also be verified
directly by solving an SDP).

Two flaws must be repaired for an exact claim.  The polynomial dips slightly
negative, and its leading form $3.742\,x^2y^2(x^2+y^2)$ vanishes on both axes,
so its homogenization is not positive definite and Reznick's
theorem~\cite{Reznick1995} -- that $(\sum_j x_j^2)^N f$ is a sum of squares for
some $N$ whenever the form $f$ is positive definite -- does not apply.  Both are fixed at once by
\begin{equation}
  \pcert \;=\; p \;+\; \tfrac{1}{4096} \;+\; \tfrac{1}{65536}\,(x^6+y^6),
\end{equation}
available for free because the $x^6+y^6$ coefficient of $p$ vanishes.  Like
$p$, $\pcert$ is not itself a sum of squares, so a multiplier is needed.  We
then find that $N=1$ suffices: $(x^2+y^2+w^2)\,P_{\mathrm{c}}(x,y,w)$ is a sum
of squares, where $P_{\mathrm{c}}$ is the homogenization of $\pcert$.  Since the
multiplier is strictly positive away from the origin, this is what establishes
$\pcert\ge0$ on $\R^2$.  The $15\times15$ Gram matrix is rounded to rationals, projected exactly onto the coefficient-matching
subspace, and verified positive definite by exact $LDL^\top$,
following~\cite{PeyrlParrilo}.

Raising the truncation repairs this instance.  At $K=4$, with the same
potential and energy and the witness optimized over the complete
nine-dimensional space of invariant octics, the alternating search over
$\mathcal{F}_4^{\mathrm S}(12)$ returns no separating witness, at two
coordinate scalings.  That is how a hierarchy is expected to behave, and it
should not be mistaken for the content of Theorem~1.  What fails at level
three is not the degree of the certificate but the cone: were $\pcert$ a sum
of squares it would be a sum of squares of cubics, and $M_3$ is indexed by
exactly those monomials, so $L_m(\pcert)\ge0$ would already follow from
$M_3\succeq0$.  Since $M_3$ carries every monomial in $q$ \emph{and} $p$ of
degree at most three, the same argument shows more: $\pcert$ is not a sum of
hermitian squares of degree three in the Weyl algebra either, so the
approximation of~\cite{NGAPP2013} cannot be reached at this level.  The relaxation carries every monomial the certificate would
need and still admits a point that is the moment sequence of no state.  The
gap exploited here is between the cones, not between the degrees.

\section{The eigenstate constraints appear to close the gap}
\label{sec:eigenstate}

Adding the (E) rows removes the violation.  At \eqref{eq:system} the rational
point of Theorem~1 gives $L(\pcert)=-0.0867$, whereas the \emph{minimum} of
$L(\pcert)$ over the eigenstate feasible set is $+1.06$ -- stable across three
coordinate scalings and two positivity margins.

Very little is needed to produce that reversal.  At $K=3$ the degree budget
in \eqref{eq:constraints} admits fifteen monomials $\mathcal{O}$ into (E),
and exact row reduction over $\Q$ shows that the resulting rows raise the rank
of the constraint system by only three, from $52$ to $55$.  Two of those three
new directions -- those carried by $\mathcal{O}=q_j^2$ and
$\mathcal{O}=p_j^2$ -- are precisely the ones the spurious point of Theorem~1
violates; it satisfies the third, from $\mathcal{O}=q_jp_j$, exactly.  Two
independent linear conditions therefore separate a verified spurious point
from a feasible set on which the witness is bounded away from zero.

More tellingly, the spurious moment vector of Theorem~1 violates the (E)
constraints by a wide margin.  Taking $\mathcal{O}=q_j^2$ and
$\mathcal{O}=p_j^2$ in turn, exact arithmetic gives
\begin{align}
  \ev{q_j^2H}-E\ev{q_j^2} &= \tfrac{48650999}{600000} = +81.08,\notag\\
  \ev{p_j^2H}-E\ev{p_j^2} &= \tfrac{393646698559}{2400000} = +164019.46,
\end{align}
for $j=1,2$; the remaining monomials of degree at most two are satisfied.  The reason is visible in the moments
themselves: $\ev{q_1^2}=1.015$, $\ev{q_1^4}=1.144$, $\ev{q_1^6}=3.97$ are
of ordinary size, and $\ev{p_1^2}=8.40$ only modestly large, but
$\ev{p_1^4}=2.0\times10^5$ and $\ev{p_1^6}=9.4\times10^9$ exceed their
ground-state values by three and six orders of magnitude.

\subsection{Systematic search}
\label{sec:search}

We searched for a violation of the eigenstate bootstrap in the settings of
Table~\ref{tab:settings}.  In the first three, the separating polynomial is
optimized over the complete space of invariant forms of the maximal degree the
truncation can see, so a null result means no witness of that degree exists;
in the $d=5$ settings the family is the invariant quartics, where the
copositive analysis of Section~\ref{sec:copositive} applies.  Five settings
give a null result.  The sixth, at $d=5$ and $K=2$, does produce a violation;
but there the only surviving constraint of type (E) is $\ev{H}=E$ itself, since
$\deg\mathcal{O}\le 2K-\deg V=0$, so the eigenstate conditions are almost
entirely absent.  Section~\ref{sec:copositive} shows why the copositive
picture leaves room for a violation at $K=2$, and why it disappears at $K=3$
for a structural reason.

\begin{table}[t]
\begin{tabular}{llll}
\toprule
$d$ & $K$ & witness family & result \\
\midrule
2 & 3 & invariant sextics (complete)   & none \\
2 & 4 & invariant octics (complete), 5 potentials & none \\
3 & 3 & invariant sextics (complete), 3 potentials & none \\
5 & 2 & invariant quartics, Horn        & \emph{$-0.968$}$^\ast$ \\
5 & 3 & invariant quartics, Horn        & none \\
5 & 3 & scaled Horn, 58 bounded potentials & none \\
\bottomrule
\end{tabular}
\caption{Search for a violation of the eigenstate bootstrap; each entry is
$\min L$ over the feasible set, and ``none'' records a nonnegative minimum,
i.e.\ that no witness in the family separates.  Five settings give no
violation.  The witness family is complete at the relevant degree in the first
three settings; at $d=5$ it consists of invariant quartics, for the reason
given in Section~\ref{sec:copositive}.
$^\ast$At $K=2$ with quartic $V$ the degree budget leaves $\ev{H}=E$ as the
only surviving constraint of type (E), so this row tests a relaxation with
essentially no eigenstate content; see Sections~\ref{sec:search} and
\ref{sec:copositive}.}
\label{tab:settings}
\end{table}

Two entries deserve comment.  The $d=3$ family contains Robinson's
form~\cite{Robinson}, which is nonnegative and not SOS, and the potential was
chosen with minima on its zero set, so the obstruction is genuinely available;
this follows the strategy of~\cite{AKR2021}, where an instance is engineered so
that the physical configuration lies on the zero set of a nonnegative non-SOS
form.
The $d=5$ entries are discussed in the next section.

The $d=5$, $K=3$ sweep is the most extensive: $64$ asymmetric ring potentials
with pre-screened witnesses, with on-site coefficient $a$ and nearest-neighbour
coupling $c$ as in Section~\ref{sec:copositive}.  Six returned a negative value.  All six lie
exactly on the boundary $a+c=0$ of the region where the quartic part of $V$ is
positive definite; the corresponding Hamiltonians are unbounded below and have
no discrete spectrum, so the bootstrap premise fails and a feasible point
proves nothing.  All six were also reported as \texttt{optimal\_inaccurate} by
the solver, while all $58$ bounded instances returned \texttt{optimal} with
$L\ge+0.349$.  The physical criterion and the solver's own status flag thus
select exactly the same six instances.

The practical conclusion: \emph{before accepting any bootstrap violation,
verify that $H$ is bounded below with discrete spectrum}.  A
relaxation is trivially non-tight for a Hamiltonian with no eigenstates, and
nothing in the solver output distinguishes that case.  We note that unbounded
potentials are not merely a pitfall: Ref.~\cite{CGSY2025} studies metastable
thermal states of a matrix model whose potential is unbounded below, and uses
SDP infeasibility above a critical temperature to bound the point at which the
thermal state ceases to exist.  The distinction that matters is whether the
absence of a state is the conclusion being drawn or an unnoticed premise.

\section{The mechanism: bounded momentum moments}
\label{sec:mechanism}

The pattern above has a cause, and it can be measured.  On the level-$K$
feasible set of the two-dimensional double well we computed the supremum of
$\ev{p_1^{2k}}$ with and without the (E) rows; Table~\ref{tab:pbound} collects
the results.

\begin{table}[t]
\begin{tabular}{llrrr}
\toprule
$E$ & rows & $\sup\ev{p_1^2}$ & $\sup\ev{p_1^4}$ & $\sup\ev{p_1^6}$\\
\midrule
12 & (S) only & 10.98 & unbounded & unbounded\\
12 & (S)+(E)  &  7.70 & 196.3     & unbounded\\
16 & (S) only & 14.66 & unbounded & unbounded\\
16 & (S)+(E)  & 11.76 & 328.6     & unbounded\\
\bottomrule
\end{tabular}
\caption{Level $K=3$, two-dimensional quartic double well.  For comparison, the
true ground state of \eqref{eq:system} has $\ev{p_1^2}=5.75$,
$\ev{p_1^4}=1.08\times10^2$ and $\ev{p_1^6}=3.42\times10^3$, while the spurious
point of Theorem~1 has $\ev{p_1^2}=8.40$, $\ev{p_1^4}=2.0\times10^5$ and
$\ev{p_1^6}=9.4\times10^9$.}
\label{tab:pbound}
\end{table}

The fourth momentum moment is \emph{unbounded} without the eigenstate
constraints and finite with them, at the right order of magnitude.  An
independent formulation -- testing feasibility of $\ev{p_1^4}\ge T$ on a
ladder of thresholds, which avoids the unboundedness that defeats a direct
maximization -- agrees: at $E=16$ the stationary set admits
$\ev{p_1^4}\ge10^8$, while the eigenstate set is already infeasible at
$\ev{p_1^4}\ge10^3$.  At least five orders of magnitude separate the two.

The spurious point of Theorem~1 sits far out along precisely this unbounded
direction, and nothing in the stationary relaxation registers it.  The position
block sees only $\ev{q^a}$ and cannot detect an inflated
$\ev{p^4}$; the commutator relations relate the two sectors but do not bound
either; only \eqref{eq:eig} does.  Note that $\ev{p^6}$ remains unbounded at
$K=3$, consistent with the degree budget, which reaches only
$\deg\mathcal{O}\le2$ there.  The budget widens with $K$: at $K=4$ the
(E) rows raise the rank of the constraint system from $119$ to $130$, against
$52$ to $55$ at $K=3$.

\section{Copositivity, and where a gap can open}
\label{sec:copositive}

The searches of Section~\ref{sec:search} failed in a structured way, and the
structure is informative.

Under the sign symmetry $q_j\to-q_j$, the trivial-parity position block at
$K=2$ has rows indexed by $\{1,q_1^2,\dots,q_d^2\}$ with entries
$\ev{q_i^2q_j^2}$, and every sign-invariant quartic witness takes the form
$\sum_{ij}W_{ij}x_i^2x_j^2$ for a symmetric matrix $W$.  Since the $x_i^2$
sweep out the nonnegative orthant, such a witness is nonnegative on $\R^d$
precisely when $W$ lies in the copositive cone
$\COP_d=\{W : x^\top\! Wx\ge0 \ \forall x\ge0\}$, whereas $M_K\succeq0$ certifies
only membership in $\PSDN$, the sum of the cone $\mathcal{P}$ of positive
semidefinite matrices and the cone $\mathcal{N}$ of the entrywise nonnegative
ones.  By Diananda's theorem~\cite{Diananda} these coincide for $d\le4$ and differ first
at $d=5$, where Horn's matrix~\cite{Diananda} is copositive but not in $\PSDN$.
This accounts for the null results at two and three degrees of freedom under
sign symmetry, and identifies $d=5$ as the first place to look.

The level of the bootstrap enters through Parrilo's
hierarchy~\cite{Parrilo2000}: $W\in \Kc^r$ iff
$(\sum_j x_j^2)^r\sum_{ij}W_{ij}x_i^2x_j^2$ is SOS, with $\Kc^0=\PSDN$ and
$\bigcup_r \Kc^r$ dense in $\COP_d$.  Level $K$ of the bootstrap certifies
membership in $\Kc^{K-2}$.  Consequently:

\begin{itemize}
\item at $K=2$, the witness need only lie outside $\Kc^0$;
\item at $K=3$, it must lie outside $\Kc^1$.
\end{itemize}

This accounts for all three of the $d=5$ entries in Table~\ref{tab:settings}.
Horn's matrix lies outside $\Kc^0$, so nothing prevents a violation at $K=2$, and
we find one, at
energies above the ground state, in a five-site ring
$V=a\sum_j(q_j^2-1)^2+c\sum_j q_j^2q_{j+1}^2$ with the dihedral symmetry that
Horn's circulant structure requires.  (Full permutation symmetry would be
fatal: the $S_5$ average of Horn's matrix is the identity, which is positive
semidefinite.)  But Horn's matrix lies \emph{inside} $\Kc^1$, and by a scaling
theorem~\cite{DDGH2013} every unit-diagonal copositive $5\times5$ matrix does.  A dihedrally invariant witness necessarily has constant, hence
unit-scalable, diagonal.  So $K=3$ at $d=5$ is closed to symmetric witnesses
for a structural reason, and indeed we find $L=+1.55$.

The same theorem says $\Kc^1$ is not invariant under diagonal scaling, and that
some scaling $DWD$ of any copositive matrix outside $\PSDN$ escapes $\Kc^1$.  We
verify this numerically and use such scalings as witnesses in an asymmetric
ring; this is the $58$-potential sweep, which found nothing.

At $d=6$ the corresponding statement fails: Hildebrand and
Afonin~\cite{HA2024} exhibit an extreme copositive $6\times6$ matrix with unit
diagonal lying outside $\Kc^1$.  Their example is not circulant, and a
dihedrally invariant witness is.  We generated approximately $1.3\times10^4$
random circulant $6\times6$ candidates, of which $5196$ passed a copositivity
test by projected gradient descent on the simplex; $\Kc^1$ membership was then
decided for each of these by a $56\times56$ SDP, and none lay outside $\Kc^1$.
The $d=6$ obstruction thus appears to be inaccessible to a symmetric ring, and
reaching it would require an asymmetric potential, as at $d=5$.

\section{Discussion}

We draw three conclusions.

First, the two relaxations must be distinguished.  The stationary bootstrap is
not tight at level three in two dimensions, and this is proved, exactly.  Since
that is the relaxation appropriate to thermal and mixed states, low-order
feasibility there is weaker evidence than it appears, and we recommend the
diagnostic below.

Second, the eigenstate bootstrap resisted every attack we could design, across
dimensions, truncation levels, and obstruction families, and always against
witness families that were either complete at the relevant degree or proved to
lie outside the cone the moment matrix can certify.  We emphasize that this is
evidence, not proof: the alternating search is a local method for a bilinear
problem.  What is established at each visited point is that the linear program
over the invariant family attains its optimum at zero, i.e.\ that no invariant
witness separates \emph{there}.

Third, and most usefully, there is a mechanism, and it suggests a theorem.
Table~\ref{tab:pbound} shows the eigenstate constraints converting an unbounded
momentum direction into a bounded one.  A bound of the form
$\ev{p^{2k}}\le C(E,k)$ is an Archimedean-type statement -- a boundedness
condition of the kind that convergence proofs require -- and its absence in the
Weyl algebra is exactly what places this problem outside the existing
theory~\cite{NPO2026,NPOhierarchy}.
We therefore conjecture:

\begin{conjecture}
Let $H$ be as in \eqref{eq:H}, bounded below with discrete spectrum.  There are
constants $C(E,k)$ such that every $m\in\mathcal{F}_K^{\mathrm{SE}}(E)$
satisfies $\ev{p_j^{2k}}\le C(E,k)$ for $k\le K-\tfrac12\deg V+1$, the case $k=2$, $K=3$,
$d=2$ with quartic $V$ being Proposition~1 below; and for all sufficiently
large $K$ the positional moments of every
$m\in\mathcal{F}_K^{\mathrm{SE}}(E)$ lie in the moment cone.
\end{conjecture}

The restriction on $k$ is a prediction of the degree budget: the operator
$\mathcal{O}=p_j^{2k-2}$ that produces the bound is admissible in (E) only when
$2k-2\le2K-\deg V$.  Table~\ref{tab:pbound} is consistent with this on both
sides: for the quartic $V$ of \eqref{eq:system} at $K=3$ the range is $k\le2$,
the bound appears for $k=2$, and $\ev{p^6}$ remains unbounded.

The dependence on $\deg V$ is not cosmetic.  When $\deg V=2K$ the budget
vanishes, (E) retains only $\ev{H}=E$, and no bound of this form can survive:
for $H=\tfrac12p^2+q^{2r}$ at $K=r$, a mixture of the ground state with a
single highly excited eigenstate, weighted so that the energy is $E$, commutes
with $H$ and is therefore feasible at every excitation, while the virial
relation for a homogeneous potential forces $\ev{p^{2k}}\to\infty$ along that
family for every $k\ge2$.  We thank Ye Zhou for this example.  It carries a
practical corollary: at $\deg V=2K$ the eigenstate constraints add nothing to
the energy, and the eigenstate bootstrap collapses to the stationary one.

At fixed $K$ the first half involves only finitely many moments, and for the
system of Theorem~1 it is a theorem.

\begin{proposition}
Let $d=2$ and let $V$ be quartic with $\min V=V_*$, and write
$V-V_*=\sum_k g_k^2$ for any sum-of-squares decomposition.  Then every
$m\in\mathcal{F}_3^{\mathrm{SE}}(E)$ satisfies $\ev{p_j^2}\le 2(E-V_*)$ and
\begin{equation}
  \ev{p_j^4}\ \le\ 2(E-V_*)\,\ev{p_j^2}
    \;+\; \sum_k\ev{\partial_j^2(g_k^2)} .
  \label{eq:pbound}
\end{equation}
\end{proposition}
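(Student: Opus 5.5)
The plan is to use only three ingredients: the normalization $\ev{H}=E$, a single eigenstate row from (E) with $\mathcal{O}=p_j^2$, and positive semidefiniteness of $M_3$ applied to suitably chosen vectors. Throughout I take real parts. $E$ is real, and by hermiticity the moments $\ev{p_j^2}$, $\ev{p_j^4}$ and $\ev{q^a}$ are real, so the real part of each (E) relation is itself a valid linear constraint. Every operator I pair with itself will have degree at most three, so every quadratic form I use is a genuine quadratic form in $M_3$.

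For the first bound, write $\ev{H}=E$ as $\tfrac12\ev{p_1^2}+\tfrac12\ev{p_2^2}+V_*+\sum_k\ev{g_k^2}=E$. Since $V$ is quartic, each $g_k$ is a polynomial of degree at most two in $q$. Hence $\ev{g_k^2}$ is $M_3$ evaluated on the coefficient vector of $g_k$, and is nonnegative. Likewise $\ev{p_l^2}=\ev{p_l^\dagger p_l}\ge0$. Dropping the term $l\ne j$ and the $g_k$ terms gives $\ev{p_j^2}\le2(E-V_*)$.

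For \eqref{eq:pbound} I impose (E) with $\mathcal{O}=p_j^2$. This is admissible because $\deg\mathcal{O}=2\le 2K-\deg V$ at $K=3$, and it involves only moments of degree at most six. It gives
\[
E\ev{p_j^2}=\tfrac12\ev{p_j^4}+\tfrac12\ev{p_j^2p_l^2}+V_*\ev{p_j^2}+\sum_k\mathrm{Re}\ev{p_j^2 g_k^2}.
\]
The key operator identity is, for any polynomial $f(q)$,
\[
p_j^2f+fp_j^2=2\,p_jfp_j-\partial_j^2f,
\]
which follows by moving one $p_j$ across $f$ using $[p_j,f]=-i\partial_jf$ on each side. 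It implies $\mathrm{Re}\ev{p_j^2g_k^2}=\ev{(g_kp_j)^\dagger(g_kp_j)}-\tfrac12\ev{\partial_j^2(g_k^2)}$. The first term is nonnegative, since $g_kp_j$ has degree three and so lies in the span indexing $M_3$. Similarly $\ev{p_j^2p_l^2}=\ev{(p_jp_l)^\dagger(p_jp_l)}\ge0$, because $p_j$ and $p_l$ commute. Discarding these nonnegative terms and multiplying by two gives exactly
\[
\ev{p_j^4}\le2(E-V_*)\ev{p_j^2}+\sum_k\ev{\partial_j^2(g_k^2)}.
\]

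The step I expect to need the most care is this bookkeeping of orderings. The moment vector is defined on a fixed ordering $\ev{q^\alpha p^\beta}$, and the quantities $\ev{p_jg_k^2p_j}$, $\ev{g_kp_j^2g_k}$ and $\mathrm{Re}\ev{p_j^2g_k^2}$ must all be re-expressed in that basis. I must check that the commutator corrections, of lower degree, are exactly the $-\tfrac12\partial_j^2(g_k^2)$ term and introduce no stray imaginary or cross terms. I would verify the identity symbolically on monomials $f=q^a$ and confirm that the resulting linear relation coincides with the real part of the (E) row. The same ordering check should also confirm that the quadratic-form evaluations of $M_3$ agree with the reordered moments.
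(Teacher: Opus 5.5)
Your proposal is correct and is essentially the paper's proof: the (E) row with $\mathcal{O}=p_j^2$, discarding the nonnegative cross term $\langle (p_jp_l)^\dagger(p_jp_l)\rangle$, the identity $\tfrac12\{p_j^2,g_k^2\}=(g_kp_j)^\dagger(g_kp_j)-\tfrac12\partial_j^2(g_k^2)$ with $g_kp_j$ of degree three, and $\langle H\rangle=E$ for the first bound. Your one addition is to justify $\langle V-V_*\rangle\ge0$ on moments through the decomposition $\sum_k g_k^2$ with $\deg g_k\le2$ and $M_3\succeq0$, which spells out what the paper compresses into ``$V\ge V_*$''.
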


\begin{proof}
Take $\mathcal{O}=p_j^2$, admissible since $\deg\mathcal{O}\le2K-\deg V=2$.
Then \eqref{eq:eig} reads
$\tfrac12\sum_i\ev{p_j^2p_i^2}+\mathrm{Re}\,\ev{p_j^2V}=E\ev{p_j^2}$.  The
momenta commute, so $p_j^2p_i^2=(p_ip_j)^\dagger(p_ip_j)$ is a diagonal entry
of $M_3$ and is nonnegative; discarding all but $i=j$ gives
$\tfrac12\ev{p_j^4}\le E\ev{p_j^2}-\mathrm{Re}\,\ev{p_j^2V}$.  For any real
polynomial $g$,
\begin{equation}
  \tfrac12\{p_j^2,g^2\} \;=\; (gp_j)^\dagger(gp_j)
     \;-\; \tfrac12\,\partial_j^2(g^2),
  \label{eq:gpid}
\end{equation}
so $\mathrm{Re}\,\ev{p_j^2g_k^2}\ge-\tfrac12\ev{\partial_j^2(g_k^2)}$, since
$g_kp_j$ has degree three and its square therefore has nonnegative
expectation.  Summing over $k$ and using $V=V_*+\sum_kg_k^2$ gives
\eqref{eq:pbound}; the bound on $\ev{p_j^2}$ is $\ev{H}=E$ with $V\ge V_*$.
\end{proof}

The first bound uses only $\ev{H}=E$ and $M_3\succeq0$, so it holds on
$\mathcal{F}_3^{\mathrm{S}}(E)$ as well; that is why the $\ev{p_1^2}$ column of
Table~\ref{tab:pbound} is finite in all four rows, and it places the mechanism
at $\ev{p^4}$ rather than $\ev{p^2}$.

Two features of the argument are worth isolating.  Every operator it uses is
carried at $K=3$: $p_ip_j$ and $g_kp_j$ have degrees two and three, inside the
moment matrix, and $\partial_j^2(g_k^2)$ has degree two.  And the single
hypothesis is that $V-V_*$ be a sum of squares, which for a quartic in two
variables is automatic by Hilbert's classification and fails in general
precisely in the regime this paper is about.  The obstruction to proving
Conjecture~1 as stated is thus the same gap the conjecture would close.  For
\eqref{eq:system} one may take
$V-V_*=\tfrac{1600}{39}\bigl[1-\tfrac{39}{80}(q_1^2+q_2^2)\bigr]^2
 +\tfrac{41}{4}\,(q_1^2-q_2^2)^2$ with $V_*=-40/39$, whose Gram matrix in the
basis $(1,q_1^2,q_2^2)$ is positive semidefinite of rank two -- its
determinant vanishes, as it must, since $V-V_*$ has four zeros -- and
$\sum_k\partial_1^2(g_k^2)=240q_1^2-2q_2^2-80$.  The right-hand side of
\eqref{eq:pbound} is itself bounded on the feasible set, so the bound is a
constant: $\ev{H}=E$ with $\ev{p_i^2}\ge0$ gives $\ev{V}\le E$, and
$V-\tfrac{39}{4}(q_1^4+q_2^4)+43$ is a nonnegative quartic in two variables,
hence a sum of squares of quadratics and of nonnegative expectation under
$M_3\succeq0$, so that $\ev{q_1^4}\le5.65$ and, by the $2\times2$ minor of
$M_3$ on $(1,q_1^2)$, $\ev{q_1^2}\le2.38$.  At $E=12$ this gives
$\ev{p_1^4}\le1.2\times10^3$, against the supremum $196.3$ of
Table~\ref{tab:pbound}: explicit, and not tight.

A strict Positivstellensatz does exist for the Weyl
algebra~\cite{Schmudgen2005}, but its hypotheses (strict positivity and an
elliptic leading symbol) exclude precisely the boundary cases at issue.

We also note where we would look next.  A construction of this kind needs a
genuine gap between positivity and sums of squares together with an objective
assembled from tunable data.  By the argument of Section~\ref{sec:momentum} the
first requirement excludes every setting with finite-dimensional representations, which
rules out spins, fermionic models, and the variational reduced density matrix
method.  Multi-matrix quantum mechanics has both the gap and the tunable
data.  Its single-trace moment
matrix is a truncated \emph{tracial} moment matrix, so positivity there
certifies only that a polynomial is cyclically equivalent to a sum of hermitian
squares, whereas a physical state need only be trace-positive, that is, have nonnegative
trace on every tuple of symmetric matrices.  These cones
already differ in two noncommuting variables~\cite{KS2008}, and the threshold
is sharp: in two variables and degree at most four they coincide~\cite{BK2010},
in analogy with Hilbert's classification.  Whether a matrix bootstrap at the
truncations in use admits such a witness is the natural next question, and it
is again the stationary relaxation that should be tested first.

\emph{A diagnostic.}  Given a feasible point $m$, minimizing $L_m(f)$ over a
fixed family of nonnegative non-SOS polynomials is a linear program, and a
negative value certifies that $m$ is the moment sequence of no state.  The
exposure is uneven.  A one-sided bound obtained by optimizing over the feasible
set -- as in the stationary bosonic relaxations of~\cite{RobichonTilloy2024},
which are multidimensional and of low order -- remains valid however large that
set is; what a cone gap costs there is convergence, not correctness.  It is
feasibility and exclusion claims that fail outright, and we make no claim
against any published result;
but the test is cheap, and we suggest running it whenever such a result is
reported in more
than one dimension -- and, per
Section~\ref{sec:search}, checking first that the Hamiltonian is bounded
below.

\begin{acknowledgments}
We thank Ye Zhou for pointing out that the degree range stated in Conjecture~1
is specific to quartic potentials, and for supplying the counterexample that
fixes it.
\end{acknowledgments}

\appendix

\section{Computational details}

The Weyl algebra is implemented in normal order using
$p^cq^a=\sum_k k!\binom{a}{k}\binom{c}{k}(-i)^k q^{a-k}p^{c-k}$; products and
adjoints agree with matrix representations to $10^{-13}$, once the
representations are formed in a basis padded by the operator degree so that
truncation does not interfere.  For the exact stage all structure constants are
Gaussian rationals, represented as pairs of exact fractions.

The exact constraint system at $K=3$, $d=2$ has $218$ rows in $74$ unknowns
with rank $52$.  Exact row reduction yields a particular solution $m_0$ and a
null-space basis $N$, so that $m=m_0+\sum_j u_jN_j$ satisfies the constraints
exactly for \emph{any} rational $u$; the floating-point optimum is then
expressed in these coordinates and rounded.

Several pitfalls deserve mention, as each cost us a result.  Solving in
unscaled coordinates returns a value well above the true minimum while
reporting successful convergence.  A solver once reported optimality at a point
violating positive semidefiniteness by $5\times10^{-2}$.  The moment matrix of a
true ground state can have $\lambda_{\min}\sim10^{-6}$, so any positivity
margin above that excludes the physical state and renders the SDP spuriously
infeasible.  And every build should be validated against an exactly known
state: evaluating the constraint system on the moments of a harmonic-oscillator
ground state, which factorize across modes and are therefore exact in any
dimension, reproduces the constraint system $Cz=b$ to $10^{-16}$, catching build errors
that no amount of reading the code will reveal.

Semidefinite programs were modelled with CVXPY~\cite{cvxpy} and solved with
Clarabel~\cite{clarabel}; the moment-matrix relaxations follow the Lasserre
hierarchy~\cite{Lasserre2001}.  The explicit rational moment vector, the
rational Gram matrix certifying the witness, and a standalone script that
re-verifies Theorem~1 from scratch using only the Python standard library are
included as ancillary files.

\bibliographystyle{apsrev4-2}
\bibliography{tightness}

@article{HHK2020,
  author  = {Han, Xizhi and Hartnoll, Sean A. and Kruthoff, Jorrit},
  title   = {Bootstrapping Matrix Quantum Mechanics},
  journal = {Phys. Rev. Lett.},
  volume  = {125},
  pages   = {041601},
  year    = {2020},
  eprint  = {2004.10212},
  archivePrefix = {arXiv},
  primaryClass  = {hep-th}
}

@unpublished{BH2021,
  author  = {Berenstein, David and Hulsey, George},
  title   = {Bootstrapping simple quantum mechanical systems},
  year    = {2021},
  eprint  = {2108.08757},
  archivePrefix = {arXiv},
  primaryClass  = {hep-th}
}

@article{BH2022a,
  author  = {Berenstein, David and Hulsey, George},
  title   = {Bootstrapping more {QM} systems},
  journal = {J. Phys. A},
  volume  = {55},
  number  = {27},
  pages   = {275304},
  year    = {2022},
  eprint  = {2109.06251},
  archivePrefix = {arXiv},
  primaryClass  = {hep-th}
}

@article{BH2023,
  author  = {Berenstein, David and Hulsey, George},
  title   = {Semidefinite programming algorithm for the quantum mechanical
             bootstrap},
  journal = {Phys. Rev. E},
  volume  = {107},
  pages   = {L053301},
  year    = {2023},
  eprint  = {2209.14332},
  archivePrefix = {arXiv},
  primaryClass  = {hep-th}
}

@article{BHreflection,
  author  = {Berenstein, David and Hulsey, George},
  title   = {One-dimensional reflection in the quantum mechanical bootstrap},
  journal = {Phys. Rev. D},
  volume  = {109},
  pages   = {025013},
  year    = {2024},
  eprint  = {2307.11724},
  archivePrefix = {arXiv},
  primaryClass  = {hep-th}
}

@article{BHhalfline,
  author  = {Berenstein, David and Hulsey, George},
  title   = {Anomalous bootstrap on the half-line},
  journal = {Phys. Rev. D},
  volume  = {106},
  pages   = {045029},
  year    = {2022},
  eprint  = {2206.01765},
  archivePrefix = {arXiv},
  primaryClass  = {hep-th}
}

@article{CGSY2025,
  author  = {Cho, Minjae and Gabai, Barak and Sandor, Joshua and Yin, Xi},
  title   = {Thermal bootstrap of matrix quantum mechanics},
  journal = {JHEP},
  volume  = {2025},
  number  = {4},
  pages   = {186},
  year    = {2025},
  doi     = {10.1007/JHEP04(2025)186},
  eprint  = {2410.04262},
  archivePrefix = {arXiv},
  primaryClass  = {hep-th}
}

@article{spurious,
  author  = {Bhattacharya, Jyotirmoy and Das, Diptarka and Das, Sayan Kumar
             and Jha, Anuj Kumar and Kundu, Moulindu},
  title   = {Numerical bootstrap in quantum mechanics},
  journal = {Phys. Lett. B},
  volume  = {823},
  pages   = {136785},
  year    = {2021},
  eprint  = {2108.11416},
  archivePrefix = {arXiv},
  primaryClass  = {hep-th}
}

@unpublished{central2025,
  author  = {Lawrence, Scott and McPeak, Brian},
  title   = {Quantum bootstrap for central potentials},
  year    = {2025},
  eprint  = {2512.09041},
  archivePrefix = {arXiv},
  primaryClass  = {quant-ph}
}

@article{Hilbert1888,
  author  = {Hilbert, David},
  title   = {{\"U}ber die Darstellung definiter Formen als Summe von
             Formenquadraten},
  journal = {Math. Ann.},
  volume  = {32},
  pages   = {342--350},
  year    = {1888}
}

@incollection{Motzkin,
  author    = {Motzkin, Theodore S.},
  title     = {The arithmetic-geometric inequality},
  booktitle = {Inequalities (Proc. Sympos. Wright-Patterson AFB, 1965)},
  publisher = {Academic Press},
  address   = {New York},
  pages     = {205--224},
  year      = {1967}
}

@article{ChoiLam,
  author  = {Choi, Man-Duen and Lam, Tsit-Yuen},
  title   = {Extremal positive semidefinite forms},
  journal = {Math. Ann.},
  volume  = {231},
  pages   = {1--18},
  year    = {1977}
}

@incollection{Robinson,
  author    = {Robinson, Raphael M.},
  title     = {Some definite polynomials which are not sums of squares of
               real polynomials},
  booktitle = {Selected Questions of Algebra and Logic},
  publisher = {Izdat. Nauka, Novosibirsk},
  pages     = {264--282},
  year      = {1973}
}

@article{Blekherman,
  author  = {Blekherman, Grigoriy},
  title   = {There are significantly more nonnegative polynomials than sums
             of squares},
  journal = {Israel J. Math.},
  volume  = {153},
  pages   = {355--380},
  year    = {2006}
}

@article{Diananda,
  author  = {Diananda, Palahenedi Hewage},
  title   = {On non-negative forms in real variables some or all of which are
             non-negative},
  journal = {Proc. Cambridge Philos. Soc.},
  volume  = {58},
  pages   = {17--25},
  year    = {1962}
}

@phdthesis{Parrilo2000,
  author  = {Parrilo, Pablo A.},
  title   = {Structured Semidefinite Programs and Semialgebraic Geometry
             Methods in Robustness and Optimization},
  school  = {California Institute of Technology},
  year    = {2000},
  doi     = {10.7907/2K6Y-CH43}
}

@article{DDGH2013,
  author  = {Dickinson, Peter J. C. and D{\"u}r, Mirjam and Gijben, Luuk
             and Hildebrand, Roland},
  title   = {Scaling relationship between the copositive cone and Parrilo's
             first level approximation},
  journal = {Optim. Lett.},
  volume  = {7},
  pages   = {1669--1679},
  year    = {2013}
}

@article{Helton2002,
  author  = {Helton, J. William},
  title   = {``Positive'' noncommutative polynomials are sums of squares},
  journal = {Ann. of Math. (2)},
  volume  = {156},
  number  = {2},
  pages   = {675--694},
  year    = {2002}
}

@book{BKP2016,
  author    = {Burgdorf, Sabine and Klep, Igor and Povh, Janez},
  title     = {Optimization of Polynomials in Non-Commuting Variables},
  series    = {SpringerBriefs in Mathematics},
  publisher = {Springer},
  year      = {2016}
}

@article{NPO2026,
  author  = {Ara{\'u}jo, Mateus and Klep, Igor and Garner, Andrew J. P.
             and V{\'e}rtesi, Tam{\'a}s and Navascu{\'e}s, Miguel},
  title   = {First-order optimality conditions for non-commutative
             optimization problems},
  journal = {Found. Comput. Math.},
  year    = {2026},
  doi     = {10.1007/s10208-026-09761-x}
}

@article{NPOhierarchy,
  author  = {Pironio, Stefano and Navascu{\'e}s, Miguel and Ac{\'i}n, Antonio},
  title   = {Convergent relaxations of polynomial optimization problems with
             noncommuting variables},
  journal = {SIAM J. Optim.},
  volume  = {20},
  pages   = {2157--2180},
  year    = {2010}
}

@article{Reznick1995,
  author  = {Reznick, Bruce},
  title   = {Uniform denominators in {H}ilbert's seventeenth problem},
  journal = {Math. Z.},
  volume  = {220},
  pages   = {75--97},
  year    = {1995}
}

@article{PeyrlParrilo,
  author  = {Peyrl, Helfried and Parrilo, Pablo A.},
  title   = {Computing sum of squares decompositions with rational
             coefficients},
  journal = {Theor. Comput. Sci.},
  volume  = {409},
  pages   = {269--281},
  year    = {2008}
}

@article{Schmudgen2005,
  author  = {Schm{\"u}dgen, Konrad},
  title   = {A strict {P}ositivstellensatz for the {W}eyl algebra},
  journal = {Math. Ann.},
  volume  = {331},
  pages   = {779--794},
  year    = {2005},
  eprint  = {math/0403076},
  archivePrefix = {arXiv},
  primaryClass  = {math.AG}
}

@article{AKR2021,
  author  = {Alfassi, Yuval and Keren, Daniel and Reznick, Bruce},
  title   = {The Non-Tightness of a Convex Relaxation to Rotation Recovery},
  journal = {Sensors},
  volume  = {21},
  pages   = {7358},
  year    = {2021}
}

@article{cvxpy,
  author  = {Diamond, Steven and Boyd, Stephen},
  title   = {{CVXPY}: A {P}ython-embedded modeling language for convex
             optimization},
  journal = {J. Mach. Learn. Res.},
  volume  = {17},
  pages   = {1--5},
  year    = {2016}
}

@unpublished{clarabel,
  author  = {Goulart, Paul J. and Chen, Yuwen},
  title   = {Clarabel: An interior-point solver for conic programs with
             quadratic objectives},
  year    = {2024},
  eprint  = {2405.12762},
  archivePrefix = {arXiv},
  primaryClass  = {math.OC}
}

@article{Lasserre2001,
  author  = {Lasserre, Jean B.},
  title   = {Global optimization with polynomials and the problem of moments},
  journal = {SIAM J. Optim.},
  volume  = {11},
  pages   = {796--817},
  year    = {2001}
}

@unpublished{LiLu2023,
  author  = {Li, Bowen and Lu, Jianfeng},
  title   = {Quantum variational embedding for ground-state energy problems:
             sum of squares and cluster selection},
  year    = {2023},
  eprint  = {2305.18571},
  archivePrefix = {arXiv},
  primaryClass  = {quant-ph}
}

@article{KS2008,
  author  = {Klep, Igor and Schweighofer, Markus},
  title   = {Sums of Hermitian squares and the {BMV} conjecture},
  journal = {J. Stat. Phys.},
  volume  = {133},
  number  = {4},
  pages   = {739--760},
  year    = {2008},
  doi     = {10.1007/s10955-008-9632-x},
  eprint  = {0710.1074},
  archivePrefix = {arXiv},
  primaryClass  = {math.OA}
}

@article{BK2010,
  author  = {Burgdorf, Sabine and Klep, Igor},
  title   = {Trace-positive polynomials and the quartic tracial moment problem},
  journal = {C. R. Math. Acad. Sci. Paris},
  volume  = {348},
  number  = {13--14},
  pages   = {721--726},
  year    = {2010}
}

@article{HA2024,
  author  = {Hildebrand, Roland and Afonin, Andrey},
  title   = {On the structure of the $6\times 6$ copositive cone},
  journal = {Linear Algebra Appl.},
  volume  = {693},
  pages   = {22--38},
  year    = {2024},
  doi     = {10.1016/j.laa.2023.02.004},
  eprint  = {2209.08039},
  archivePrefix = {arXiv},
  primaryClass  = {math.OC}
}

@article{NGAPP2013,
  author  = {Navascu\'es, Miguel and Garc\'ia-S\'aez, Artur and Ac\'in, Antonio
             and Pironio, Stefano and Plenio, Martin B.},
  title   = {A paradox in bosonic energy computations via semidefinite
             programming relaxations},
  journal = {New J. Phys.},
  volume  = {15},
  pages   = {023026},
  year    = {2013},
  doi     = {10.1088/1367-2630/15/2/023026},
  eprint  = {1203.3777},
  archivePrefix = {arXiv},
  primaryClass  = {quant-ph}
}

@article{RobichonTilloy2024,
  author  = {Robichon, Gustave and Tilloy, Antoine},
  title   = {Bootstrapping the stationary state of bosonic open quantum
             systems},
  journal = {Phys. Rev. A},
  year    = {2026},
  doi     = {10.1103/5db2-q49l},
  eprint  = {2410.07384},
  archivePrefix = {arXiv},
  primaryClass  = {quant-ph}
}

@unpublished{Morita2026,
  author  = {Morita, Takeshi and Piensuk, Worapat and Soni, Pushkar},
  title   = {Ambiguity problem of the bootstrap method in quantum mechanics},
  year    = {2026},
  eprint  = {2605.30536},
  archivePrefix = {arXiv},
  primaryClass  = {hep-th}
}

@article{NancarrowXin2023,
  author  = {Nancarrow, Colin Oscar and Xin, Yuan},
  title   = {Bootstrapping the gap in quantum spin systems},
  journal = {JHEP},
  volume  = {2023},
  number  = {8},
  pages   = {052},
  year    = {2023},
  doi     = {10.1007/JHEP08(2023)052},
  eprint  = {2211.03819},
  archivePrefix = {arXiv},
  primaryClass  = {hep-th}
}

\end{document}